\documentclass[french,english,nocjs]{cjs-rcs-article}

\usepackage{natbib}
\usepackage{multirow}
\usepackage{booktabs}
\usepackage{numprint}

\newcommand{\bX}{{\symbf X}}
\newcommand{\bx}{{\symbf x}}

\newcommand{\bbeta}{\symbf{\beta}}
\newcommand{\btheta}{\symbf{\theta}}
\newcommand{\bxi}{ \boldsymbol{\xi} }

\newcommand{\bSigma}{{\symbf \Sigma}}

\newtheorem{example}{Example}
\newtheorem{assumption}{Assumption}
\newtheorem{condition}{Condition}

  \title{A goodness-of-fit test for the logistic propensity score model under nonignorable missing data}

  \author[orcid=0009-0006-1249-0029, email=cmanli176@163.com]
         {Manli \surname{Cheng}}
  \affil{Department of Statistics and Actuarial Science,University of Waterloo, Waterloo, Ontario, Canada}
  \author[orcid=0000-0002-1229-0477, email=yangjc@uwaterloo.ca, corresponding]
         {Yangjianchen \surname{Xu}}
  \affil{Department of Statistics and Actuarial Science,University of Waterloo, Waterloo, Ontario, Canada}
  \author[orcid=0000-0002-9455-5466, email=qinglong.tian@uwaterloo.ca]
         {Qinglong \surname{Tian}}
  \affil{Department of Statistics and Actuarial Science,University of Waterloo, Waterloo, Ontario, Canada}
 \author[orcid=0000-0003-2165-9157, email=pengfei.li@uwaterloo.ca]
         {Pengfei \surname{Li}}
  \affil{Department of Statistics and Actuarial Science,University of Waterloo, Waterloo, Ontario, Canada}

  \begin{englishabstract}
    Logistic regression is widely used to model the propensity score in the analysis of nonignorable missing data. However, goodness-of-fit testing for this propensity score model has received limited attention in the literature. In this paper, we propose a new goodness-of-fit testing procedure for the logistic propensity score model under nonignorable missing data. The proposed test is based on an unweighted sum-of-squared residuals constructed from the marginal missingness mechanism and accommodates the partial observability of the outcome. We establish the asymptotic distribution of the test statistic under both the null hypothesis and general alternatives, and develop a bootstrap procedure with theoretical guarantees to approximate its null distribution. We show that the resulting bootstrap test attains asymptotically correct size and is consistent, with power converging to one under model misspecification. Simulation studies and a real data application demonstrate that the proposed method performs well in finite samples. 
  \end{englishabstract}

  \begin{frenchabstract}
    Logistic regression is widely used to model the propensity score in the analysis of nonignorable missing data. However, goodness-of-fit testing for this propensity score model has received limited attention in the literature. In this paper, we propose a new goodness-of-fit testing procedure for the logistic propensity score model under nonignorable missing data. The proposed test is based on an unweighted sum-of-squared residuals constructed from the marginal missingness mechanism and accommodates the partial observability of the outcome. We establish the asymptotic distribution of the test statistic under both the null hypothesis and general alternatives, and develop a bootstrap procedure with theoretical guarantees to approximate its null distribution. We show that the resulting bootstrap test attains asymptotically correct size and is consistent, with power converging to one under model misspecification. Simulation studies and a real data application demonstrate that the proposed method performs well in finite samples. 
  \end{frenchabstract}

  \begin{keywords}
  \item Bootstrap
  \item Goodness-of-fit test
  \item Instrumental variable
  \item Logistic propensity score
  \item Nonignorable missing data
  \end{keywords}

  \begin{classification}
  \item[Primary] 62D10, 62F03
  \item[Secondary] 62E20, 62F40
  \end{classification}

  \begin{sharing}
All computations in this article were carried out using R statistical software. The data and R code required to reproduce the numerical results, including the simulation studies and application examples, are available at the following GitHub repository: \url{https://github.com/ManliCheng/gof-test-ps-mnar}.
  \end{sharing}

  \begin{funding}
Dr. Xu’s work is supported in part by the Natural Sciences and Engineering Research Council of Canada (NSERC, RGPIN-2025-04055).
Dr. Tian’s work is supported by NSERC (RGPIN-2023-03479).
Dr. Li’s work is supported in part by NSERC (RGPIN-2026-04406) and the Faculty of Mathematics Research Chair at the University of Waterloo.
  \end{funding}

\begin{document}

\maketitle                      


\section{Introduction}
\label{chap_intro}

Let $Y$ denote an outcome subject to missingness, $\bX$ a vector of fully observed covariates, and $R$ a missingness indicator taking value $1$ if $Y$ is observed and $0$ otherwise. In nonignorable missing data settings, $R$ may depend on the unobserved value of $Y$ \citep{little2019statistical}. A widely used specification assumes that the propensity score
\(
\pi_t(\bx,y)= \Pr(R=1 \mid Y=y, \bX=\bx)
\)
follows a logistic regression model \citep{ai2020simple,morikawa2021semiparametric,liu2022Full}, often of the form
\begin{equation}
\label{prop.nullmodel}
\pi_0(\bx,y;\alpha,\bbeta,\gamma)
= \frac{1}
{1+\exp\{\alpha + \bbeta^\top r(\bx) + \gamma y\}},
\end{equation}
where $r(\bx)$ is a known function of the covariates, $\alpha$ and $\gamma$ are unknown scalar parameters, and $\bbeta$ is an unknown $m$-dimensional parameter vector. This logistic propensity score model plays a central role in likelihood-based and semiparametric methods for nonignorable missing data, yet its adequacy is rarely assessed in practice. In this paper, we propose a goodness-of-fit test for model \eqref{prop.nullmodel} under nonignorable missing data.

The identification and estimation of parameters in logistic propensity score models have been extensively studied in the literature. For example, \citet{wang2014instrumental} and \citet{miao2016identifiability} investigated identifiability conditions and showed that the model parameters are identifiable when there exists an instrumental variable, that is, a variable that affects the distribution of $Y$ but not the missingness mechanism. A variety of methods have been proposed to estimate the unknown parameters, including generalized method of moments approaches \citep{wang2014instrumental,ai2020simple}, optimal estimating equations \citep{morikawa2021semiparametric}, and likelihood-based methods \citep{morikawa2017semiparametric, liu2022Full,li2023instability,cheng2025novel}. These developments have led to substantial progress in estimation and efficiency theory for nonignorable missing data. Despite these advances, inference procedures based on model \eqref{prop.nullmodel} can be highly sensitive to model misspecification. An incorrect propensity score model may result in biased estimation and misleading scientific conclusions, underscoring the need for formal goodness-of-fit diagnostics.

In the literature, goodness-of-fit testing has been extensively studied for logistic regression models with complete data; see \citet{hosmer1997comparison} for a review. However, these classical procedures are not applicable in the present setting, since the outcome $Y$ is only partially observed and no information on $Y$ is available when $R=0$. 
To our knowledge, only two studies are directly relevant to the goodness-of-fit problem considered here. \citet{chen2020pseudo} considered a special case of \eqref{prop.nullmodel} with $\bbeta=\symbf{0}$. \citet{tang2024model} proposed a goodness-of-fit test for a related but different model in which the linear term $\alpha+\bbeta^\top r(\bx)$ in \eqref{prop.nullmodel} is replaced by a nonparametric function of $\bx$. Their approach has several limitations in the present context. First, it relies on kernel smoothing to estimate the nonparametric component, requiring bandwidth selection and introducing additional tuning complexity. Second, it employs a method-of-moments estimator for $\gamma$, which can be unstable \citep{li2023instability}. Finally, because the null model in \citet{tang2024model} differs from \eqref{prop.nullmodel}, their procedure does not directly address goodness-of-fit for the logistic model considered here.
Given the widespread use of model \eqref{prop.nullmodel} in the analysis of nonignorable missing data, a statistically valid goodness-of-fit procedure tailored specifically to \eqref{prop.nullmodel} is needed.

Our contributions are summarized as follows. First, we show that testing the logistic propensity score model \eqref{prop.nullmodel} under nonignorable missing data can be reduced to testing a parametric specification for the marginal propensity score $\pi_t^*(\bx)=\Pr(R=1\mid \bX=\bx)$. Building on this reduction, we develop a goodness-of-fit test tailored specifically to model \eqref{prop.nullmodel}. The proposed test adapts the unweighted sum-of-squared residual idea from complete-data logistic regression to a setting where the outcome is only partially observed.
Second, we establish the asymptotic distribution of the proposed test statistic under both the null hypothesis and general alternatives. We further construct a bootstrap procedure and show that it achieves asymptotically correct size under the null and consistency under model misspecification.
Third, we demonstrate through simulation studies and a real data application that the proposed procedure performs well in finite samples.

The remainder of the paper is organized as follows. In Section \ref{section2}, we present the problem setup and develop the proposed goodness-of-fit test, along with its asymptotic properties and bootstrap implementation. Section \ref{section.simu} reports simulation studies to evaluate the finite-sample performance of the proposed procedures. Section \ref{sec-real} illustrates the method through a real data application. Section \ref{sec-diss} concludes with a discussion and directions for future research. Technical proofs  are provided in the Appendix.

\section{Main results\label{section2}}
\subsection{Problem setup and parameter estimation}

Suppose that we have $n$ independent and identically distributed observations
$(Y_iR_i, R_i, \bX_i)$ ($i=1,\ldots,n$) from $(YR, R, \bX)$. 
Let
\[
\mathcal{F}_{\text{logit}}
=\left\{\pi_0(\bx,y;\alpha,\bbeta,\gamma)
= \frac{1}{1+\exp\{\alpha + \bbeta^\top r(\bx) + \gamma y\}}:
\alpha \in \mathbb{R},\ \bbeta \in \mathbb{R}^m,\ \gamma \in \mathbb{R}\right\}
\]
denote the class of logistic propensity score models. 
Based on the observed data 
$\{(Y_iR_i,R_i,\bX_i): i=1,\ldots,n\}$, 
our interest is to test
\begin{equation}
\label{test.hypothesis}
H_0: \pi_t(\bx,y) \in \mathcal{F}_{\text{logit}}
\quad \text{versus} \quad
H_A: \pi_t(\bx,y) \notin \mathcal{F}_{\text{logit}}.
\end{equation}

Identification and consistent estimation of the parameters $(\alpha,\bbeta,\gamma)$ under $H_0$ are essential for constructing a valid testing procedure. To this end, we impose the following two assumptions.

\begin{assumption}
There exists an instrumental variable that affects the distribution of $Y$ but does not affect the missingness mechanism $\pi_t(\bx,y)$.
\end{assumption}

\begin{assumption}
The conditional probability density or mass function of $Y$ given the covariates $\bx$ and $R=1$  admits a parametric form $f(y \mid \bx;\bxi)$, where $\bxi$ is an  unknown $p$-dimensional  parameter vector.
\end{assumption}

Assumption~1 ensures that the parameters $(\alpha,\bbeta,\gamma)$ are identifiable under the null hypothesis \citep{wang2014instrumental,miao2016identifiability}. 
Assumption~2 specifies a parametric model for the conditional distribution of $Y$ given $\bx$ among the observed units. Since $(Y,\bX)$ is fully observed when $R=1$, this assumption can be assessed using standard goodness-of-fit procedures; see \citet{fan2001goodness,lin2002model}.
Although Assumption~2 is not required for identification of $(\alpha,\bbeta,\gamma)$ or for consistency of their estimator, estimation may be highly unstable without additional modeling structure, even when the parameters are identifiable \citep{li2023instability}. Taken together, Assumptions~1 and~2 lead to reliable estimation of $(\alpha,\bbeta,\gamma)$ under the null model; see \citet{liu2022Full}.

We next estimate the unknown parameter vector 
$\btheta = (\alpha,\bbeta^{\top},\gamma,\bxi^{\top})^{\top}$
using the maximum likelihood estimation. Let $\pi_t^*(\bx)=\Pr(R=1 \mid \bX=\bx)$. Under Assumption~2 and the null hypothesis of \eqref{test.hypothesis}, it follows from Bayes' formula that $\pi_t^*(\bx)$ admits the following logistic regression representation:
\begin{equation}
\label{new-logistic}
\pi(\bx;\btheta)
=\frac{1}{1+\exp\{\alpha+\bbeta^\top r(\bx)+c(\bx;\gamma,\bxi)\}},
\end{equation}
where 
\(
c(\bx;\gamma,\bxi)=\log\left\{\int \exp(\gamma y)\, f(y \mid \bx;\bxi)\, dy\right\}.
\)

Based on \eqref{new-logistic} and the observed data $\{(Y_iR_i, R_i, \bX_i)\}_{i=1}^n$, the full log-likelihood function is given by
\begin{equation*}
\ell_n(\btheta)
=\sum_{i=1}^n R_i \log f(Y_i \mid \bX_i;\bxi)
+\sum_{i=1}^n R_i \log \pi(\bX_i;\btheta)
+\sum_{i=1}^n (1-R_i)\log\{1-\pi(\bX_i;\btheta)\}.
\end{equation*}
The maximum likelihood estimator of $\btheta$ under the null hypothesis is defined as
\begin{equation}
\label{theta.mle}
\hat{\btheta}
=\arg\max_{\btheta}\, \ell_n(\btheta).
\end{equation}

Note that the log-likelihood $\ell_n(\btheta)$ and the estimator $\hat{\btheta}$ are constructed under the null hypothesis of \eqref{test.hypothesis}. When the data are generated under the alternative, model~\eqref{prop.nullmodel} serves as a working model for $\pi_t(\bx,y)$, and $\ell_n(\btheta)$ and $\hat{\btheta}$ are interpreted as the quasi-log-likelihood and the quasi-maximum likelihood estimator, respectively \citep{white1982maximum}.

Under the regularity conditions given in Section~\ref{conditions} of the Appendix, $\hat{\btheta}$ converges in probability to
\[
\btheta^* = \arg\max_{\btheta} \, \E\{\ell_n(\btheta)\},
\]
which coincides with the true parameter value under the null hypothesis and with the pseudo-true value under the alternative \citep{white1982maximum}.

\subsection{Proposed test}
Under the null hypothesis of \eqref{test.hypothesis} and Assumption~2, it follows from \eqref{new-logistic} that the marginal propensity score $\pi_t^*(\bx)=\Pr(R=1 \mid \bX=\bx)$ belongs to the following class of logistic models:
\[
\mathcal{F}_{\text{logit}}^*
=\left\{\pi(\bx;\btheta)
=\frac{1}{1+\exp\{\alpha+\bbeta^\top r(\bx)+c(\bx;\gamma,\bxi)\}}:
\alpha \in \mathbb{R},\ \bbeta \in \mathbb{R}^m,\ \gamma \in \mathbb{R},\ \bxi \in \mathbb{R}^p\right\}.
\]
Therefore, testing \eqref{test.hypothesis} reduces to testing
\begin{equation}
\label{test.hypothesis2}
H_0: \pi_t^*(\bx) \in \mathcal{F}_{\text{logit}}^*
\quad \text{versus} \quad
H_A: \pi_t^*(\bx) \notin \mathcal{F}_{\text{logit}}^*.
\end{equation}

Let \[
\Delta = \E\left[ \bigl\{R - \pi(\bX;\btheta^*)\bigr\}^2 
- \pi(\bX; \btheta^*)\bigl\{1- \pi(\bX; \btheta^*)\bigr\} \right],
\]
which equals zero under the null and is nonzero under the alternative. Thus, $\Delta$ quantifies the discrepancy between the true propensity score and the logistic propensity score model in \eqref{prop.nullmodel}.
Based on these observations, we propose the following unweighted sum-of-squared residual-based test statistic:
\begin{equation}
\label{Tn}
T_n = \frac{1}{\sqrt{n}} \sum_{i=1}^n
\left[
\bigl\{R_i - \pi(\bX_i;\hat{\btheta})\bigr\}^2
- \pi(\bX_i;\hat{\btheta}) \bigl\{1-\pi(\bX_i;\hat{\btheta})\bigr\}
\right].
\end{equation}

It can be verified that  $T_n/\sqrt{n}$ is consistent to $\Delta$. 
When the null hypothesis of \eqref{test.hypothesis} holds, and hence that of \eqref{test.hypothesis2} also holds, we expect $T_n$ to fluctuate around zero; under the alternative, $|T_n|$ tends to take larger values. Accordingly, we reject the null hypothesis of \eqref{test.hypothesis} for large values of $|T_n|$, with the critical value determined by the limiting distribution of $T_n$.

Note that \cite{hu2025receiver} also proposed $T_n$ for testing the goodness-of-fit of \eqref{prop.nullmodel} when $Y$ is binary. However, they did not study the asymptotic properties of $T_n$ or consider more general response types. This issue is nontrivial, as $T_n$ depends only on the fully observed data $(R,\bX)$, whereas the estimator $\hat{\btheta}$ additionally depends on the observed values of $Y$ when $R=1$. Consequently, existing asymptotic results for unweighted sum-of-squared residual-based tests under complete data are not directly applicable, necessitating new theoretical development.

\subsection{Asymptotic properties}
In this section, we present the asymptotic properties of $T_n$. All proofs are deferred to the Appendix.
The following theorem establishes the asymptotic normality of the test statistic $T_n$, regardless of whether the null hypothesis holds.

\begin{theorem}
\label{null-asynorm}
Under Assumptions~1 and~2, and Conditions 1--7 in the Section~\ref{conditions} of the Appendix, we have
\[
T_n - \sqrt{n}\,\Delta 
\xrightarrow{d} 
\mathcal{N}(0,\sigma^2),
\]
as $n \to \infty$, where
$\xrightarrow{d}$ denotes convergence in distribution,  and $\sigma^2$ is defined in \eqref{final-sigma2} of the Appendix.
\end{theorem}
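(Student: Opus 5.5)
The plan is a standard delta-method argument: linearize $T_n$ around the pseudo-true value $\btheta^*$, then combine the result with the asymptotically linear expansion of the quasi-MLE $\hat{\btheta}$. Write
\[
g(r,\bx;\btheta)=\{r-\pi(\bx;\btheta)\}^2-\pi(\bx;\btheta)\{1-\pi(\bx;\btheta)\}.
\]
Since $r\in\{0,1\}$, this simplifies to $g=r\{1-2\pi(\bx;\btheta)\}+\pi(\bx;\btheta)^2-\pi(\bx;\btheta)+\pi(\bx;\btheta)^2$, which is smooth in $\btheta$. By definition $\E\{g(R,\bX;\btheta^*)\}=\Delta$. A first-order Taylor expansion gives
\[
T_n-\sqrt n\Delta=\frac{1}{\sqrt n}\sum_{i=1}^n\{g(R_i,\bX_i;\btheta^*)-\Delta\}+\Bigl\{\frac1n\sum_{i=1}^n \frac{\partial g(R_i,\bX_i;\tilde\btheta)}{\partial\btheta^\top}\Bigr\}\sqrt n(\hat\btheta-\btheta^*),
\]
where $\tilde\btheta$ lies between $\hat\btheta$ and $\btheta^*$. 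Under the regularity conditions (Conditions 1--7: compact parameter space, identifiability of $\btheta^*$, dominated derivatives of $c(\bx;\gamma,\bxi)$ and of $\log f$), we have $\hat\btheta\to\btheta^*$ in probability. A uniform law of large numbers then yields
\[
\frac1n\sum_{i=1}^n\partial g(R_i,\bX_i;\tilde\btheta)/\partial\btheta \xrightarrow{p} \mathbf{G}:=\E\{\partial g(R,\bX;\btheta^*)/\partial\btheta\}.
\]
Here $\partial g/\partial\btheta=\{-2R+4\pi-1\}\,\partial\pi/\partial\btheta$, and $\partial\pi/\partial\btheta=-\pi(1-\pi)\,\partial\{\alpha+\bbeta^\top r(\bx)+c(\bx;\gamma,\bxi)\}/\partial\btheta$. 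The only delicate derivatives are those of $c$, which are ratios of integrals against $f(y\mid\bx;\bxi)$; the conditions must allow differentiation under the integral sign.

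Next I expand the quasi-MLE in White's (1982) manner. Let
\[
s(\btheta)=R\,\partial\log f(Y\mid\bX;\bxi)/\partial\btheta+\{R-\pi(\bX;\btheta)\}\,\partial\operatorname{logit}\pi/\partial\btheta
\]
be the per-observation score; it depends on $Y$ only through $RY$, so it is observable. Let $\mathbf{A}=-\E\{\partial s(\btheta^*)/\partial\btheta^\top\}$, which is assumed nonsingular. Then
\[
\sqrt n(\hat\btheta-\btheta^*)=\mathbf{A}^{-1}n^{-1/2}\sum_i s_i(\btheta^*)+o_p(1),
\]
using $\E\, s(\btheta^*)=0$, which holds because $\btheta^*$ maximizes $\E\ell$. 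Substituting this into the expansion of $T_n$ gives the influence representation
\[
T_n-\sqrt n\Delta=n^{-1/2}\sum_{i=1}^n\psi_i+o_p(1),\qquad \psi_i=g(R_i,\bX_i;\btheta^*)-\Delta+\mathbf{G}^\top\mathbf{A}^{-1}s_i(\btheta^*).
\]
The $\psi_i$ are i.i.d. with mean zero and finite variance, since $|g|\le 2$ and the score has finite second moment by the conditions. The classical CLT then gives the limit $\mathcal N(0,\sigma^2)$ with $\sigma^2=\E(\psi^2)$, which expands as
\[
\mathrm{Var}(g)+2\,\mathbf{G}^\top\mathbf{A}^{-1}\E\{(g-\Delta)s\}+\mathbf{G}^\top\mathbf{A}^{-1}\E(ss^\top)\mathbf{A}^{-1}\mathbf{G}.
\]
This should correspond to \eqref{final-sigma2}. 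Under $H_0$ the information identity $\E(ss^\top)=\mathbf{A}$ simplifies the expression, but the theorem is stated generally, so I keep the sandwich form.

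The main obstacle is the expansion of $\hat\btheta$ under misspecification. Consistency to a pseudo-true $\btheta^*$ requires that $\btheta^*$ be a unique, well-separated maximizer of $\E\ell$, and that the remainder be controlled via uniform convergence of the Hessian over a neighborhood of $\btheta^*$. Both involve $c(\bx;\gamma,\bxi)$, whose moment-generating-type integral must be finite and dominated near $\gamma^*$. I would handle this with the standard M-estimation argument (Newey--McFadden style), assuming the relevant envelope conditions are among Conditions 1--7. The cross term $\E\{(g-\Delta)s\}$ is nonzero in general because $g$ and $s$ share $R$, and this is exactly why complete-data results cannot be quoted directly.
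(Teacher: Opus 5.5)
Your proposal is correct and follows the same route as the paper. Both arguments expand $T_n$ to first order about $\btheta^*$, substitute White's linear representation $\sqrt{n}(\hat{\btheta}-\btheta^*)=\symbf{J}^{-1}n^{-1/2}\sum_{i}\psi(Y_i,\bX_i,R_i;\btheta^*)+o_p(1)$, and apply the CLT to the i.i.d. influence terms.

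With $\mathbf{G}=\symbf{h}$ and $\mathbf{A}=\symbf{J}$, your expanded variance is exactly $(1,\symbf{h}^\top\symbf{J}^{-1})\bSigma(1,\symbf{h}^\top\symbf{J}^{-1})^\top$ from \eqref{final-sigma2}. The only cosmetic difference is that the paper bounds the Taylor remainder with the envelope $M$ of Condition~7, whereas you invoke a uniform law of large numbers.
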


Based on Theorem 1, a natural approach is to construct a test using a plug-in estimator of $\sigma^2$. With the closed-form expression of $\sigma^2$ given in the Appendix, we may obtain a consistent estimator, denoted by $\hat{\sigma}^2$, for $\sigma^2$. We then reject $H_0$ in \eqref{test.hypothesis} when
\begin{equation}
\label{test.plugin}
|T_n|>\hat{\sigma} z_{1-\alpha/2}
\end{equation}
at the significance level $a$, where $Z_{1-a/2}$ denotes the $(1-a/2)$th quantile of the standard normal distribution.

However, our numerical studies indicate that the plug-in estimator $\hat{\sigma}^2$ tends to overestimate $\sigma^2$, which leads to a conservative test; see \S \ref{section.simu} for details. To address this issue, we propose the following bootstrap procedure, which generates samples under the fitted null model to approximate the null distribution of $T_n$.

\textit{Step 1.} Sample $\bX_1^*,\ldots,\bX_n^*$
from $F_n(\bx)$, the empirical distribution function of $\{\bX_i\}_{i=1}^n$. 

\textit{Step 2.} For $i=1,\ldots,n$, 
generate $R_i^*$ from a $\mathrm{Bernoulli}\{\pi(\bX_i^*;\hat{\btheta})\}$ distribution. 

\textit{Step 3.} For $i=1,\ldots,n$, generate $Y_i^*$ from $f(y \mid \bX_i^*; \hat{\bxi})$ if $R_i^* = 1$. 

\textit{Step 4.} Based on the bootstrap sample $\{(Y_i^*R_i^*, R_i^*, \bX_i^*): i=1,\ldots,n\}$, 
compute the bootstrap estimator $\hat{\btheta}^*$ and the corresponding bootstrap test statistic $T_n^*$
analogously to \eqref{theta.mle} and \eqref{Tn}. 

Based on the bootstrap samples, we approximate the null distribution of $T_n$ by the conditional distribution of $T_n^*$ given the observed data. The following theorem justifies this approach by establishing the limiting conditional distribution of $T_n^*$.

\begin{theorem}
\label{boot-asynorm}
Under Assumptions~1 and~2, and Conditions 1-11 in Section~\ref{conditions} of the Appendix, we have, conditional on the observed data,  
\[
T_n^* 
\xrightarrow{d} 
\mathcal{N}(0,\sigma^2_{\text{Boot}}),
\]
where $\sigma_{\text{Boot}}^2$ is defined in \eqref{sigma_boot} of the Appendix. Furthermore, when $H_0$ of \eqref{test.hypothesis} holds, we have $\sigma_{\text{Boot}}^2 = \sigma^2$.
\end{theorem}

Theorems~\ref{null-asynorm} and~\ref{boot-asynorm} together imply that the null limiting distribution of $T_n$ coincides with the limiting distribution of $T_n^*$ conditional on the observed data. Therefore, we can use the conditional distribution of $T_n^*$ to calibrate the distribution of $T_n$ under $H_0$. Specifically, let $q_{n,1-a}^*$ be the $(1-a)$-quantile of $\bigl|T_n^*\bigr|$ conditional on the observed data. We reject $H_0$ of \eqref{test.hypothesis} when 
\begin{equation}
\label{test2}
|T_n| > q_{n,1-a}^*
\end{equation}
at the significance level $a$. 

The following theorem establishes the asymptotic validity of the proposed bootstrap test.

\begin{theorem}
\label{boot-level-power}
Assume the same conditions as in Theorem~\ref{boot-asynorm}. 

\medskip
\noindent\textnormal{(i) [Asymptotically correct size]} 
If $H_0$ holds, then
\[
\Pr\!\Big(\,\bigl| T_n \bigr| > q_{n,1-a}^*\,\Big)\longrightarrow  a.
\]

\noindent\textnormal{(ii) [Consistency]} 
If $H_A$ holds with $\Delta \neq 0$, then
\[
\Pr\!\Big(\,\bigl| T_n \bigr| > q_{n,1-a}^*\,\Big)\longrightarrow  1.
\]
\end{theorem}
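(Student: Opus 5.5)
The plan is to combine Theorems~\ref{null-asynorm} and~\ref{boot-asynorm} with a quantile-convergence argument. The bootstrap quantile $q_{n,1-a}^*$ is the $(1-a)$-quantile of $|T_n^*|$ given the data. By Theorem~\ref{boot-asynorm}, $T_n^*$ converges conditionally in distribution to $\mathcal{N}(0,\sigma^2_{\text{Boot}})$. Because this limit has a continuous, strictly increasing distribution function on $(0,\infty)$, $|T_n^*|$ converges conditionally in law to $\sigma_{\text{Boot}}|Z|$ with $Z\sim\mathcal{N}(0,1)$. By the standard lemma that weak convergence to a continuous limit with strictly increasing CDF at the quantile implies quantile convergence (van der Vaart, Lemma 21.2), applied in probability, we get
\[
q_{n,1-a}^*\xrightarrow{p}\sigma_{\text{Boot}}z_{1-a/2}.
\]
If Theorem~\ref{boot-asynorm} is only stated almost surely or in probability along the data, I would use the subsequence characterization of convergence in probability to pass to almost sure statements along subsequences. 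This transfer from conditional weak convergence to convergence of the random quantile is the one technical step that needs care. It also requires $\sigma_{\text{Boot}}>0$, which I take as part of the regularity conditions (nondegenerate variance).

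For part (i), under $H_0$ we have $\Delta=0$, and Theorem~\ref{boot-asynorm} gives $\sigma^2_{\text{Boot}}=\sigma^2$. Theorem~\ref{null-asynorm} then yields $T_n\xrightarrow{d}\mathcal{N}(0,\sigma^2)$, so $|T_n|\xrightarrow{d}\sigma|Z|$. Since $q_{n,1-a}^*\to\sigma z_{1-a/2}$ in probability, Slutsky's lemma gives $|T_n|-q_{n,1-a}^*\xrightarrow{d}\sigma(|Z|-z_{1-a/2})$. The limit puts no mass at $0$, so
\[
\Pr\bigl(|T_n|>q_{n,1-a}^*\bigr)\to\Pr(|Z|>z_{1-a/2})=a.
\]

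For part (ii), under $H_A$ with $\Delta\neq0$, Theorem~\ref{null-asynorm} gives $T_n=\sqrt{n}\Delta+O_p(1)$, so $|T_n|\ge\sqrt{n}|\Delta|-O_p(1)\to\infty$ in probability. Meanwhile $q_{n,1-a}^*=O_p(1)$ because it converges in probability to the finite constant $\sigma_{\text{Boot}}z_{1-a/2}$. Note that $\sigma_{\text{Boot}}^2$ is defined at the pseudo-true $\btheta^*$ and is finite by the conditions. Hence for any $M$,
\[
\Pr\bigl(|T_n|>q_{n,1-a}^*\bigr)\ge\Pr\bigl(|T_n|>M\bigr)-\Pr\bigl(q_{n,1-a}^*\ge M\bigr).
\]
Letting $n\to\infty$ and then $M\to\infty$ shows the power tends to one. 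Part (ii) in fact only needs $q_{n,1-a}^*$ to be stochastically bounded, which already follows from the tightness of the conditional law of $T_n^*$.
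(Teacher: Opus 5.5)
Your proof is correct and takes essentially the paper's route: conditional weak convergence from Theorem~\ref{boot-asynorm} plus Lemma~21.2 of van der Vaart gives convergence in probability of $q_{n,1-a}^*$, Slutsky with Theorem~\ref{null-asynorm} gives the size, and a $\sqrt{n}$-divergent statistic compared with a stochastically bounded critical value gives the power. The $\sigma_{\mathrm{Boot}}>0$ you need follows from the full-rank condition on $\bSigma_0$ in Condition~9. Your limit $\sigma_{\mathrm{Boot}}z_{1-a/2}$, the quantile of $|\mathcal{N}(0,\sigma^2_{\mathrm{Boot}})|$, is actually the right one; the paper phrases it as the quantile of $\mathcal{N}(0,\sigma^2_{\mathrm{Boot}})$, which would not yield level $a$. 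Your tightness argument in part (ii) is also a slightly cleaner version of the paper's one-sided bound.
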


In applications, the bootstrap quantiles $q_{n,1-a}^*$ are obtained via bootstrap simulation. In our simulation study and real data analysis, we use $B=500$ bootstrap samples to approximate the distribution of $T_n^*$ and compute $q_{n,1-a}^*$.

\section{Simulation study\label{section.simu}} 

In this section, we report simulation studies to examine the finite-sample performance of the test in \eqref{test.plugin}, where the asymptotic variance of $T_n$ is estimated by the plug-in method, and of the bootstrap testing procedure in \eqref{test2}, in terms of type I error and power. We refer to the former as the plug-in test and the latter as the bootstrap test.

We consider sample sizes $n = 1000,2000$ and $4000$. 
{Simulated rejection rates under the null models are based on 10,000 replications to ensure accurate estimation of the type I error rate, whereas those under the alternative models are based on 1,000 replications.}
In all scenarios, let $\bX = (X_1, X_2, X_3)^\top$, where
$X_1 \sim \mathcal{N}(0,1)$,
$X_2 \sim \mathrm{Bernoulli}(0.5)$,
$X_3 \sim \mathcal{N}(1,1)$,
and $X_1, X_2, X_3$ are independent. The true propensity score is specified as
\begin{equation*}
\pi_t(\bx,y)
=
\{1 + \exp(\alpha + \beta_1 x_1 + \beta_2 x_2 + \gamma y + e(\bx) + g(\bx)y)\}^{-1},
\end{equation*}
where the terms $e(\bx)$ and $g(\bx)y$ are included to generate alternative models.
We consider three response distributions (binary, normal, and gamma) to examine the performance of the proposed method under different outcome types. For each distribution, five scenarios are studied, with the first corresponding to the null model and the remaining four to alternatives. In all scenarios, the marginal response probability is approximately $\Pr(R = 1) = 0.8$. The variable $X_3$ serves as an instrumental variable and is excluded from the propensity score model.

\begin{example}\label{example-bernoulli}
The response is binary, with
$Y \mid \bX= \bx,\,R=1  \sim \mathrm{Bernoulli}\{p(\bx)\}$,
where
\[
\mbox{logit}\{p(\bx)\}= 1- x_1 - x_2 + 2x_3
\]
with $\mbox{logit}(x)=\log\{x/(1-x)\}$. 
The values of $(\alpha, \beta_1, \beta_2, \gamma)$ and the functions $e(\bx)$ and $g(\bx)$ are given in Table~\ref{tab-para-bn}.
\end{example}

\begin{example}\label{example-norm}
The response is continuous, with
\[
Y \mid \bX=\bx,\, R=1 \sim \mathcal{N}\bigl(\mu(\bx), \sigma^2\bigr),
\]
where
\[
\mu(\bx) = 1 - 1.5 x_1 - 1.5 x_2 + 3 x_3,
\qquad
\sigma^2 = 1.
\]
The corresponding parameter settings are also reported in Table~\ref{tab-para-bn}.
\end{example}

\begin{example}\label{example-gamma}
The response is continuous, with
\[
Y \mid \bX=\bx,\, R=1 \sim \mathrm{Gamma}\bigl(\kappa, \lambda(\bx)\bigr),
\]
where
\[
\lambda(\bx) = \exp(1 - 1.5 x_1 - 1.5 x_2 + 2 x_3).
\]
Here, $\mathrm{Gamma}(a,b)$ denotes the gamma distribution with shape parameter $a$ and scale parameter $b$. The shape parameter is set to $\kappa = 1$ for Scenarios~I--III and 
$\kappa = \exp(1)$ for Scenarios~IV--V. 
This choice of $\kappa$ ensures that the conditional moment generating function 
of $Y$ given $\bX=\bx$ and $R=1$ exists in all scenarios considered in this example.
The values of $(\alpha, \beta_1, \beta_2, \gamma)$, together with the functions $e(\bx)$ and $g(\bx)$, are given in Table~\ref{tab-para-bn}.
\end{example}

\begin{table}[!htt]
  \caption{Parameter values of $(\alpha,\beta_1,\beta_2,\gamma)$, $e(\bx)$ and $g(\bx)$ in Examples \ref{example-bernoulli}-- \ref{example-gamma}.}
  \label{tab-para-bn}
  \centering
  \begin{tabular}{ccccc}
    \toprule
&Scenario & $(\alpha,\beta_1,\beta_2,\gamma)$ & $e(\bx)$ & $g(\bx)$ \\
\midrule
\multirow{6}[0]{*}{Example~\ref{example-bernoulli}} 
 &I   & $(-1.1,-1.5,-1.5,-0.5)$ & $0$ & $0$ \\
 &II  & $(-1.6,-2.0,-2.0,-0.5)$ & $0.5x_1^2$ & $0$ \\
 &III & $(-1.6,-1.5,-2.0,-0.5)$ & $0.5x_1^2 + 0.5x_1^2 x_2$ & $0$ \\
 &IV  & $(-1.0,\phantom{-}1.0,-2.5,-0.5)$ & $0$ & $0.5x_1^2$ \\
 &V   & $(-1.0,-1.0,-2.5,-0.5)$ & $0$ & $0.5x_1^2 + x_1^2 x_2$ \\
 \midrule
\multirow{6}[0]{*}{Example~\ref{example-norm}} 
&  I  & $(-1.0,2.0,-1.0,-0.5)$& $0$ & $0$ \\
&  II & $(-1.0,1.2,-1.0,-0.5)$& $0.5x_1^2$ & $0$ \\
& III & $(-1.0,1.3,-1.5,-0.5)$& $0.5x_1^2 + 0.5x_1^2 x_2$ & $0$ \\
& IV & $(-5.0,1.0,\phantom{-}1.0,-0.5)$ & $0$ & $0.5x_1^2$ \\
& V  & $(-3.5,3.0,-2.0,-0.5)$& $0$ & $0.5x_1^2 + x_1^2 x_2$ \\
\midrule
\multirow{6}[0]{*}{Example~\ref{example-gamma}} 
&
 I    & $(1.0,-1.5,-1.5,-0.5)$ &  $0$ & $0$ \\
  &  II   & $(1.0,-1.5,-2.8,-0.5)$ & $0.5x_1^2$ & $0$ \\
  &  III  & $(1.0,-1.1,-3.5,-0.5)$ & $0.5x_1^2 + 0.5x_1^2 x_2$ & $0$ \\
   & IV   & $(1.0,-1.0,-2.0,-0.5)$   &$0$ &  $0.5 - 0.1\exp(-0.5x_1^2)$\\
  &  V    & $(1.0,-1.0,-2.0,-0.5)$   &$0$ &  $0.5 - 0.1\exp(-x_1^2 + x_2)$\\
    \bottomrule
  \end{tabular}
\end{table}

Table~\ref{table-rr} reports the simulated rejection rates of the plug-in test and the bootstrap test at the $5\%$ significance level for Examples~\ref{example-bernoulli}--\ref{example-gamma}. For Scenario~I, the rejection rates correspond to empirical type~I error, whereas for Scenarios~II--V they correspond to empirical power under different forms of model misspecification.

\begin{table}[!htt]
  \caption{Simulated rejection rates of the bootstrap test and the plug-in test in Examples \ref{example-bernoulli}--\ref{example-gamma}.}
 \label{table-rr}
  \centering
  \begin{tabular}{cccccccc}
	\toprule 
     & & & \multicolumn{5}{c}{Scenarios} \\
     \midrule
		 &  $n$ & Method  &  I & II & III & IV & V \\
         \cmidrule{2-8}
         \multirow{6}[0]{*}{Example~\ref{example-bernoulli}} &
           \multirow{2}[0]{*}{1000} & Bootstrap & 0.054  & 0.610  & 0.999  & 0.739  & 0.778  \\
         & & Plug-in & 0.010  & 0.282  & 0.937  & 0.023  & 0.148  \\
    & \multirow{2}[0]{*}{2000} & Bootstrap & 0.050  & 0.838  & 1.000  & 0.940  & 0.939  \\
         & & Plug-in & 0.024  & 0.622  & 1.000  & 0.219  & 0.436  \\
    &\multirow{2}[0]{*}{4000} & Bootstrap & 0.054  & 0.990  & 1.000  & 0.998  & 0.995  \\
         & & Plug-in & 0.041  & 0.952  & 1.000  & 0.783  & 0.888  \\
         \midrule
     \multirow{6}[0]{*}{Example~\ref{example-norm}} &  
     \multirow{2}[0]{*}{1000} & Bootstrap & 0.053  & 0.462  & 0.874  & 0.564  & 0.975  \\
         &  & Plug-in & 0.013  & 0.128  & 0.547  & 0.000  & 0.053  \\
   & \multirow{2}[0]{*}{2000} & Bootstrap & 0.048  & 0.679  & 0.986  & 0.670  & 1.000  \\
         &  & Plug-in & 0.031  & 0.393  & 0.909  & 0.007  & 0.330  \\
   & \multirow{2}[0]{*}{4000} & Bootstrap & 0.051  & 0.913  & 1.000  & 0.753  & 1.000  \\
        &  & Plug-in & 0.041  & 0.771  & 1.000  & 0.019  & 0.926  \\ 
    \midrule 
     \multirow{6}[0]{*}{Example~\ref{example-gamma}} & 
    \multirow{2}[0]{*}{1000} & Bootstrap & 0.052  & 0.565  & 0.616  & 0.829  & 0.939  \\
        &   & Plug-in & 0.045  & 0.409  & 0.427  & 0.397  & 0.844  \\
  &   \multirow{2}[0]{*}{2000} & Bootstrap & 0.047  & 0.867  & 0.909  & 0.974  & 0.984  \\
    &       & Plug-in & 0.044  & 0.790  & 0.840  & 0.800  & 0.988  \\
   &  \multirow{2}[0]{*}{4000} & Bootstrap & 0.050  & 0.991  & 0.994  & 1.000  & 1.000  \\
       &    & Plug-in & 0.050  & 0.980  & 0.987  & 0.987  & 1.000  \\
    \bottomrule
  \end{tabular}
\end{table}

We first examine performance under the null model. Across all three examples, the bootstrap test yields rejection rates close to the nominal level for all sample sizes, indicating accurate type~I error control in finite samples. In contrast, the plug-in test exhibits different behavior across outcome types. For the binary and normal responses, it is noticeably conservative when $n=1000$ or $2000$, with empirical rejection rates well below the nominal $0.05$ level. Although its type~I error becomes closer to the nominal level when $n=4000$, a noticeable gap remains. For the gamma response, the plug-in test shows improved size performance relative to the other two examples, but it remains less accurate than the bootstrap test, whose rejection rates stay consistently closer to $0.05$.

We next examine power under the alternative models. In all three examples, the rejection rates of both procedures increase with the strength of the alternatives and with sample size, reflecting the expected gain in power as more information becomes available. Moreover, the bootstrap test consistently outperforms the plug-in test. This advantage is particularly pronounced in the binary and normal settings, where the bootstrap test yields substantially higher rejection rates across all four alternative scenarios. In the gamma setting, a similar pattern is observed, although the difference between the two methods is generally less pronounced, especially when the sample size is moderate or large.

Overall, the bootstrap test provides more reliable control of type~I error across a broad range of settings and achieves higher power than the plug-in test under model misspecification. The finite-sample improvement is especially pronounced for binary and normal responses.

\section{Real application\label{section.application}} \label{sec-real}
We analyze data from a study on children’s mental health in Connecticut 
\citep{zahner1997factors, ibrahim2001using}. 
The binary outcome $Y$ indicates teacher-reported psychopathology, where 
$Y=1$ denotes borderline or clinical psychopathology and $Y=0$ denotes normal status. 
The covariates of interest include \emph{father} ($X_1$; $0$ = presence of a father figure, 
$1$ = absence), \emph{health} ($X_2$; $0$ = no health problems, $1$ = fair or poor health, 
chronic condition, or activity limitation), and \emph{parent’s report} ($X_3$; 
$1$ = borderline or clinical psychopathology, $0$ = normal). 
The complete dataset, reported in Table~1 of \citet{ibrahim2001using}, consists of 
2{,}486 children. Teacher reports of psychopathology are missing for 1{,}061 subjects 
(42.7\%), whereas all covariates are fully observed. 

As noted by \citet{ibrahim2001using}, the probability of missingness may depend on the 
child’s unobserved psychopathology status, since teachers may be more inclined to 
report when the status is perceived as clearly normal or clearly abnormal. 
This suggests a nonignorable missingness mechanism. 
\citet{ibrahim2001using} investigated this hypothesis under two model assumptions: 
(i) a logistic outcome regression model for $Y$ given $X_1$ and $X_2$, and 
(ii) a logistic propensity score model for $R$ given $Y$ and $X_2$. 
That is, $X_1$ is treated as an instrumental variable, and $X_3$ is not used in their 
analysis when justifying the nonignorable missingness mechanism. 
Under these two model assumptions, they showed that the coefficient of $Y$ in the 
logistic propensity score model is statistically significant at the 5\% level, 
providing empirical evidence against the missing-at-random assumption.

Next, we apply the proposed test to assess whether the assumed propensity score 
model is supported by the data. Since we observe $(Y,X_1,X_2)$ completely when 
$R=1$, we posit a logistic regression model for the conditional distribution of 
$Y$ given $(X_1,X_2)$ and $R=1$:
\begin{equation}
\label{data.model1}
\mathrm{logit}\{\Pr(Y=1 \mid R=1, X_1, X_2)\} 
= \xi_0 + \xi_1 X_1 + \xi_2 X_2 .
\end{equation}
The estimation results are summarized in Table~\ref{tab:reg-test-summary}. 
A goodness-of-fit test based on the unweighted sum-of-squared residuals for 
logistic regression with complete data \citep{hosmer1997comparison}, which can be 
implemented using the \texttt{resid} function in the \texttt{R} package 
\texttt{rms}, yields a $p$-value of $0.6986$, providing no evidence against the 
adequacy of the assumed logistic model~\eqref{data.model1}. 

\begin{table}[!h]
  \caption{Regression coefficient estimates for models~\eqref{data.model1} and~\eqref{data.model2} 
in the children’s mental health data.}
  \label{tab:reg-test-summary}
  \centering
  \begin{tabular}{lccccclcccc}
  \toprule
&\multicolumn{4}{c}{Model~\eqref{data.model1}}& & &\multicolumn{4}{c}{Model~\eqref{data.model2}} \\
\textbf{Variable} & \textbf{Coef.} & \textbf{S.E.} &\textbf{ Wald $Z$} & \textbf{$p$-value}  && \textbf{Variable} &\textbf{ Coef. }& \textbf{S.E.} & \textbf{Wald $Z$} & \textbf{$p$-value} \\
\cmidrule{1-5} \cmidrule{7-11} 
 Intercept & $-1.737$ & $0.107$ & $-16.240$ & $<0.0001$ & &Intercept & $-1.025$ & $0.689$ & $-1.488$ & $0.137$ \\
$x_1$     & $0.542$  & $0.161$ & $3.370$   & $0.001$  &&$x_2$  & $-0.304$ & $0.122$ & $-2.487$ & $0.013$ \\
$x_2$     & $0.247$  & $0.138$ & $1.790$   & $0.074$  && $y$ & $2.157$  & $1.092$ & $1.976$  & $0.048$ \\
\bottomrule
\end{tabular}
\end{table}

Under model~\eqref{data.model1}, we then apply the proposed bootstrap test and the 
plug-in test to assess the goodness-of-fit of the following logistic propensity 
score model considered in \citet{ibrahim2001using}:
\begin{equation}
\label{data.model2}
\Pr(R=1 \mid Y, X_2) 
= \{1+\exp(\alpha + \beta X_2 + \gamma Y)\}^{-1}.
\end{equation}
The corresponding $p$-values are $0.604$ for the bootstrap test and $0.556$ for 
the plug-in test. These results provide no evidence against the adequacy of the 
propensity score model~\eqref{data.model2} for the children’s mental health data.
The estimation results for model~\eqref{data.model2} are reported in 
Table~\ref{tab:reg-test-summary}. 
These results are broadly consistent with those reported in 
\citet{ibrahim2001using}, indicating similar inferential conclusions.

\section{Conclusion}
\label{sec-diss}
In this paper, we develop a goodness-of-fit test for the logistic propensity score model \eqref{prop.nullmodel} under nonignorable missing data by reducing the problem to testing a parametric specification for the marginal propensity score. This leads to a simple test statistic based on an unweighted sum of squared residuals involving only fully observed variables. 
We establish its asymptotic normality under both the null hypothesis and general alternatives, and propose a bootstrap procedure to approximate the null distribution, which is shown to achieve asymptotically correct size under the null and consistency under alternatives. Simulation studies demonstrate that the bootstrap test provides accurate type~I error control and improved power over the plug-in approach in finite samples, and a real data application illustrates its practical utility.

There are several directions in which the present work could be further developed. For example, it would be of interest to extend the proposed framework to more flexible semiparametric or nonparametric outcome models \citep{li2023instability,cheng2025novel}. It is also of interest to develop goodness-of-fit procedures for broader classes of propensity score models beyond the logistic form, such as the semiparametric propensity score models considered in \citep{kim2011,shao2016}.

\makebackmatter                 

\appendix   
\section{Preparation and regularity conditions}

\subsection{Preparation \label{prepar}}
The technical development involves three types of distributions, which we recall and define below.

Let 
$
F_t(y,r \mid \bx)
$
and $F(\bx)$
denote the true conditional distribution function of $(RY,R)$ given $\bx$
and the marginal distribution function of $\bX$, respectively. 
In what follows, $\E(\cdot)$, $\Var(\cdot)$, and $\Pr(\cdot)$
denote expectation, variance, and probability with respect to the joint distribution 
$F_t(y,r \mid \bx)\, F(\bx)$.

Let 
$
F(y,r \mid \bx,\btheta)
$
denote the conditional distribution determined by $f(y \mid \bx,\bxi)$ and $\pi(\bx;\btheta)$, and define
$
F_0(y,r \mid \bx)
= F(y,r \mid \bx,\btheta^*).
$
Then $F_0(y,r \mid \bx) F(\bx)$ 
is the joint distribution of $(RY,R,\bX)$ under the null hypothesis in \eqref{test.hypothesis}. 
Under the alternative, it is the Kullback–Leibler projection of 
$F_t(y,r \mid \bx) F(\bx)$ onto the class
\[
\left\{
F(y,r \mid \bx,\btheta) F(\bx):
\btheta \in \mathbb{R}^{m+p+2}
\right\}.
\]
In what follows, $\E_0(\cdot)$, $\Var_0(\cdot)$, and $\Pr_0(\cdot)$
denote expectation, variance, and probability with respect to 
$F_0(y,r \mid \bx) F(\bx)$.

Note that $F(y,r \mid \bx,\hat{\btheta}) F_n(\bx)$
is the joint distribution of the bootstrap sample
\[
\{(R_i^* Y_i^*, \bX_i^*, R_i^*)\}_{i=1}^n,
\]
given the observed data.
In what follows, $\E_*$, $\Var_*$, and $\Pr_*$ 
denote expectation, variance, and probability with respect to 
$F(y,r \mid \bx,\hat{\btheta}) F_n(\bx)$.

Let $L_t(y,\bx,r)$ denote the likelihood 
 contribution of $(RY,R)$
conditional on $\bX=\bx$
under $F_t(y,r \mid \bx)$. 
Further, define
$$
L(y,\bx,r;\btheta)
=\{f(y \mid \bx;\bxi)\pi(\bx;\btheta)\}^r
\{1-\pi(\bx;\btheta)\}^{1-r}
$$
and
\begin{equation*}
\ell(y,\bx,r;\btheta)
= r \log f(y \mid \bx;\bxi)
+ r \log \pi(\bx;\btheta)
+ (1-r)\log\{1-\pi(\bx;\btheta)\},
\end{equation*}
which are the likelihood and log-likelihood contributions of $(RY,R)$
conditional on $\bX=\bx$ under $F(y,r \mid \bx,\btheta)$.

Define
\begin{equation*}
\psi(y,\bx,r;\btheta)
= \frac{\partial \ell(y,\bx,r;\btheta)}{\partial \btheta}.
\end{equation*}
Then
\[
\ell_n(\btheta)
= \sum_{i=1}^n \ell(Y_i,\bX_i,R_i;\btheta),
\]
and the estimator $\hat{\btheta}$ satisfies the first-order condition
\[
\frac{\partial \ell_n(\hat{\btheta})}{\partial\btheta}
=
\sum_{i=1}^n
\psi(Y_i,\bX_i,R_i;\hat{\btheta})
= {\symbf{0}}.
\]

Recall that
\[
\btheta^*
=\arg\max_{\btheta} \, \E\{\ell_n(\btheta)\} 
=
\arg\max_{\btheta} \, \E\{\ell(Y,\bX,R;\btheta)\}.
\]
Then $\btheta^*$ satisfies
\begin{equation*}
\E\bigl\{
\psi(Y,\bX,R;\btheta^*)
\bigr\}
= {\symbf{0}}.
\end{equation*}

Define
\begin{equation*}
H(\bx,r;\btheta)
=
\{r - \pi(\bx;\btheta)\}^2
-
\pi(\bx;\btheta)\{1- \pi(\bx;\btheta)\}.
\end{equation*}
Then the test statistic $T_n$ in \eqref{Tn} can be written as
\begin{equation}
\label{def:TnH}
T_n
=
\frac{1}{\sqrt{n}}
\sum_{i=1}^n
H(\bX_i,R_i;\hat{\btheta}).
\end{equation}

\subsection{Regularity conditions\label{conditions}}

The following regularity conditions are imposed to establish the asymptotic properties of $T_n$.

\begin{condition}
$\btheta^*$ is the unique maximizer of $\E\{\ell_n(\btheta)\}$ under the alternative.
\end{condition}

\begin{condition}
The parameter space ${\symbf{\Theta}}$ of ${\btheta}$ is compact, and 
${\btheta^*}$ lies in the interior of ${\symbf{\Theta}}$.
\end{condition}

\begin{condition}
The function $\ell(y,\bx,r;\btheta)$ has continuous third-order partial derivatives with respect to $\btheta$ for all $(y,\bx,r)$.
\end{condition}

\begin{condition}
(i) $\E\{|\log L_t(Y,\bX,R)|\}<\infty$; 
(ii) for sufficiently small $\rho>0$,
\[
\E\!\left[
\log\{1+L(Y,\bX,R;\btheta,\rho)\}
\right]<\infty
\quad \text{for all } \btheta \in {\symbf{\Theta}},
\]
where
\[
L(y,\bx,r;\btheta,\rho)
=
\sup_{\|\btheta' -\btheta\|<\rho}
L(y,\bx,r;\btheta').
\]
Here $\|\btheta\|$ denotes the $L_2$-norm of $\btheta$.
\end{condition}

\begin{condition}
The matrix
\[
{\symbf{J}}
=
-\E\!\left\{
\frac{\partial^2\ell(Y,\bX,R;\btheta^*)}
{\partial\btheta\,\partial\btheta^\top}
\right\}
\]
has full rank, and
\[
{\symbf{\Sigma}}
=
\Var\!\left\{
Z(Y,\bX,R;\btheta^*)
\right\}
\]
is positive definite, where
\[
Z(Y,\bX,R;\btheta^*)
=
\begin{pmatrix}
H(\bX,R;\btheta^*)\\
\psi(Y,\bX,R;\btheta^*)
\end{pmatrix}.
\]
\end{condition}

\begin{condition}
The elements of
\[
{\symbf{h}}
=
\E\!\left\{
\frac{\partial H(\bX,R;\btheta^*)}{\partial \btheta}
\right\}
\]
exist and are finite.
\end{condition}

\begin{condition}
Let $\theta_1$, $\theta_2$, and $\theta_3$ be any three components of ${\btheta}$. 
There exist $\epsilon>0$ and a function $M(y,\bx,r)$ such that
\[
\sup_{\|\btheta-\btheta^*\|\le \epsilon}
\left|
\frac{\partial^3 \ell(y,\bx,r;\btheta)}
{\partial\theta_1\partial\theta_2\partial\theta_3}
\right|
\le M(y,\bx,r),
\qquad
\sup_{\|\btheta-\btheta^*\|\le \epsilon}
\left|
\frac{\partial^2 H(\bx,r;\btheta)}
{\partial\theta_1\partial\theta_2}
\right|
\le M(y,\bx,r),
\]
and $\E\{M(Y,\bX,R)\}<\infty$.
\end{condition}

The asympototic properties of $T_n^*$ further relies on the following reguarltiy conditions.

\begin{condition}
The bootstrap estimator $\hat{\btheta}^*$ is consistent to $\btheta^*$. \end{condition}

\begin{condition}
The matrices 
$$
{\symbf{J}}_0=-\E_0\left\{
\frac{\partial^2\ell(Y,\bX,R;\btheta^*)}{\partial\btheta\partial\btheta^\top}
\right\}
\qquad
\mbox{and}
\qquad
{\symbf{\Sigma}}_0=\Var_0\left\{
Z(Y,\bX,R;\btheta^*)
\right\} $$ 
have full rank. 
Further, 
the elements of
\[
{\symbf{h}}_0
=
\E_0
\left\{ 
\frac{\partial H(\bX,R;\btheta^*)}{\partial \btheta}
\right\}
\]
exist and are finite.
\end{condition}

\begin{condition}
Let $\theta_1$ and $\theta_2$ be any two components of $\btheta$. 
With $\epsilon$ and $M(y,\bx,r)$ defined in Condition~7, there exists a positive $\epsilon_0$  such that 
\[
\sup_{\|\btheta-\btheta^*\|\le \epsilon}
\left|
\frac{\partial \ell(y,\bx,r;\btheta)}
{\partial\theta_1}
\right|^{2+\epsilon_0}
\le M(y,\bx,r),
\qquad
\sup_{\|\btheta-\btheta^*\|\le \epsilon}
\left|
\frac{\partial^2 \ell(y,\bx,r;\btheta)}
{\partial\theta_1\partial\theta_2}
\right|
\le M(y,\bx,r),
\]
and 
\[
\sup_{\|\btheta-\btheta^*\|\le \epsilon}
\left|
\frac{\partial H(\bx,r;\btheta)}
{\partial\theta_1}
\right|
\le M(y,\bx,r).
\]
Movever, 
\[
\sup_{\|\btheta-\btheta^*\|\le \epsilon} M^*(\bx;\btheta)
\le M^{**}(\bx),
\]
where
\[
M^*(\bx;\btheta)
=
\iint_{y,r} M^2(y,\bx,r)\, dF(y,r \mid \bx,\btheta),
\]
and $\E\{M^{**}(\bX)\}<\infty$.
\end{condition}

\begin{condition}
Let $m(y,\bx,r;\btheta)$ be any element of
\[
\left\{
\frac{\partial^2 \ell(y,\bx,r;\btheta)}
{\partial\btheta\,\partial \btheta^\top}, 
\;
\frac{\partial H(\bx,r;\btheta)}
{\partial\btheta}, 
\;
Z(y,\bx,r;\btheta)Z^{\top}(y,\bx,r;\btheta)
\right\}.
\]
Define
\[
m^*(\bx;\btheta)
=
\iint_{y,r} m(y,\bx,r;\btheta)
\, dF(y,r \mid \bx,\btheta).
\]
Then $m^*(\bx;\btheta)$ is continuously differentiable in $\btheta$, and
\[
\sup_{\|\btheta-\btheta^*\|\le \epsilon}
\left|
\frac{\partial m^*(\bx;\btheta)}
{\partial \theta_1}
\right|
\le M^{**}(\bx),
\]
for any component $\theta_1$ of $\btheta$, where
$\epsilon$ is defined in Condition~7, 
$M^{**}(\bx)$ is defined in Condition~10, and
$\E\{M^{**}(\bX)\}<\infty$.
\end{condition}

We comment that Conditions 1–7 are standard regularity conditions for establishing asymptotic properties of parametric models. In particular, they guarantee that $\hat{\btheta}$ is strongly consistent for $\btheta^*$ and asymptotically normal \citep{white1982maximum}. Moreover, these conditions ensure that the test statistic $T_n$ admits a valid linear approximation, which is essential for deriving its limiting distribution.

The consistency assumption in Condition 8 can be established by adapting the Argmax theorem, that is, Theorem 5.7 of \cite{van2000asymptotic}. More discussions can be found in \cite{cheng2010}. Conditions 9–11 are standard conditions to ensure that the conditional weak law of large numbers and conditional central limit theorem can be applied to some bootstrap quantities involved in the proof of Theorem 2 in the main paper.

\section{Proof of Theorem 1 in the main paper\label{proof.thm1}}

Recall that $\hat{\btheta}$ solves the following estimating equation
\[
\sum_{i=1}^n \psi(Y_i,\bX_i,R_i;\hat{\btheta}) = {\symbf{0}}.
\]
Following \citet{white1982maximum}, under Conditions~1--7, a Taylor expansion of the estimating equation around $\btheta^*$ yields
\begin{align}
\label{theta_asy1}
\sqrt{n}(\hat{\btheta} - \btheta^*)
&=
-\left\{
\frac{1}{n}\sum_{i=1}^n
\frac{\partial^2 \ell(Y_i,\bX_i,R_i; \btheta^*)}
{\partial\btheta\,\partial\btheta^\top}
\right\}^{-1}
\frac{1}{\sqrt{n}}
\sum_{i=1}^{n}
\psi(Y_i,\bX_i,R_i;\btheta^*)
+ o_p(1).
\end{align}

By the weak law of large numbers,
\[
-\frac{1}{n}\sum_{i=1}^n
\frac{\partial^2 \ell(Y_i,\bX_i,R_i ; \btheta^*)}
{\partial\btheta\,\partial\btheta^\top}
\;\xrightarrow\;
{\symbf{J}}
\]
in probability, 
where
\[
{\symbf{J}}
=
\E
\left[
-
\frac{\partial^2 \ell(Y,\bX,R; \btheta^*)}
{\partial\btheta\,\partial\btheta^\top}
\right]
\]
is nonsingular
by Condition~5. 

Therefore, by the continuous mapping theorem and Slutsky’s theorem, \eqref{theta_asy1} implies
\begin{align}
\label{theta_asy2}
\sqrt{n}(\hat{\btheta} - \btheta^*)
&=
{\symbf{J}}^{-1}
\frac{1}{\sqrt{n}}
\sum_{i=1}^{n}
\psi(Y_i,\bX_i,R_i;\btheta^*)
+ o_p(1).
\end{align}

Recall that 
\begin{equation*}
H(\bx,r;\btheta)
=
\{r - \pi(\bx;\btheta)\}^2
-
\pi(\bx; \btheta)\{1- \pi(\bx; \btheta)\},
\end{equation*}
and 
$$
T_n
=
\frac{1}{\sqrt{n}}
\sum_{i=1}^n
H(\bX_i,R_i;\hat{\btheta}).
$$
Then
\[
\Delta
=
\E\{ H(\bX,R;\btheta^*) \}
\qquad
\mbox{and}
\qquad
T_n - \sqrt{n}\Delta
=
\frac{1}{\sqrt{n}}
\sum_{i=1}^n
\left\{
H(\bX_i,R_i;\hat{\btheta})
-
\Delta
\right\}.
\]

Applying a first-order Taylor expansion of $H(\bX_i,R_i;\hat{\btheta})$ around $\btheta^*$ gives
\begin{align*}
T_n - \sqrt{n}\Delta
&=
\frac{1}{\sqrt{n}}
\sum_{i=1}^n
\left\{
H(\bX_i,R_i;\btheta^*)
-
\Delta
\right\}
\\
&\quad
+
\frac{1}{n}
\sum_{i=1}^n
\left\{
\frac{\partial H(\bX_i,R_i;\btheta^*)}{\partial \btheta}
\right\}^{\!\top}
\sqrt{n}(\hat{\btheta}-\btheta^*)
+
o_p(1).
\end{align*}

Recall that 
\[
{\symbf{h}}
=
\E
\left\{
\frac{\partial H(\bX,R;\btheta^*)}{\partial \btheta}
\right\}.
\]
By the weak law of large numbers, 
\[
\frac{1}{n}
\sum_{i=1}^n
\frac{\partial H(\bX_i,R_i;\btheta^*)}{\partial \btheta}
\;\xrightarrow{}\;
{\symbf{h}}
\]
in probability. 
Combining this with \eqref{theta_asy2} and applying Slutsky’s theorem yields
\begin{align*}
T_n-\sqrt{n}\Delta
&=
\frac{1}{\sqrt{n}}\sum_{i=1}^n
\Big[
H(\bX_i,R_i;\btheta^*)
-\Delta
+
{\symbf{h}}^\top{\symbf{J}}^{-1}
\psi(Y_i,\bX_i,R_i;\btheta^*)
\Big]
+ o_p(1).
\end{align*}
Under Conditions~1--7, the summand has mean zero and finite second moment. Hence, by the central limit theorem and Slutsky’s theorem,
\[
T_n-\sqrt{n}\Delta
\;\xrightarrow{d}\;
\mathcal{N}(0,\sigma^2),
\]
where
\begin{align}
\label{final-sigma2}
\sigma^2
&=
(1,{\symbf{h}}^\top {\symbf{J}}^{-1})\bSigma
(1,{\symbf{h}}^\top {\symbf{J}}^{-1})^\top
.
\end{align}
This completes the proof.

\section{Proof of Theorem 2 in the main paper\label{proof.thm2}}

\subsection{Two useful lemmas}
We first present two useful lemmas. The first concerns a conditional weak law of large numbers, and the second establishes a conditional central limit theorem for certain bootstrap quantities. 

In what follows, we use $(Y^*,R^*,\bX^*)$ to denote random elements drawn from the distribution $F(y,r\mid\bx,\hat{\btheta})F_n(\bx)$.

\begin{lemma}
\label{lemma1}
Assume Conditions~1--11. Then the following results hold:

\begin{itemize}

\item[(i)] 
\[
n^{-1}\sum_{i=1}^n M(Y_i^*,\bX_i^*,R_i^*)
-
\E_*\{M(Y^*,\bX^*,R^*)\}
=
o_p(1),
\]
and
\[
n^{-1}\sum_{i=1}^n M(Y_i^*,\bX_i^*,R_i^*)
=
O_p(1).
\]

\item[(ii)] 
\[
n^{-1}\sum_{i=1}^n
\frac{\partial^2\ell(Y_i^*,\bX_i^*,R_i^*;\hat{\btheta})}
{\partial\btheta\,\partial\btheta^\top}
=
-{\symbf{J}}_0
+
o_p(1).
\]

\item[(iii)]
\[
n^{-1}\sum_{i=1}^n
\frac{\partial H(Y_i^*,\bX_i^*,R_i^*;\hat{\btheta})}
{\partial\btheta}
=
{\symbf{h}}_0
+
o_p(1).
\]

\end{itemize}
\end{lemma}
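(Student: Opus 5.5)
The plan is to prove (i) with a conditional Chebyshev argument, organized around the triangular-array structure of the bootstrap sample. Given the observed data, the triples $(Y_i^*,\bX_i^*,R_i^*)$ are i.i.d.\ draws from $F(y,r\mid\bx,\hat{\btheta})F_n(\bx)$. So
\[
\Var_*\Bigl\{n^{-1}\sum_{i=1}^n M(Y_i^*,\bX_i^*,R_i^*)\Bigr\}\le n^{-1}\E_*\{M^2(Y^*,\bX^*,R^*)\}.
\]
The second moment on the right equals $n^{-1}\sum_{i=1}^n M^*(\bX_i;\hat{\btheta})$. By strong consistency of $\hat{\btheta}$ (Theorem~1's setup, following \citet{white1982maximum}), with probability tending to one $\|\hat{\btheta}-\btheta^*\|\le\epsilon$. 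On that event Condition~10 bounds this average by $n^{-1}\sum_i M^{**}(\bX_i)$, which is $O_p(1)$ by the weak law of large numbers. Hence the conditional variance is $O_p(n^{-1})$, and the conditional Chebyshev inequality gives the first display in (i). For the second display, I write $\E_*\{M\}=n^{-1}\sum_i\int M\,dF(y,r\mid\bX_i,\hat{\btheta})$. By Cauchy--Schwarz each integral is at most $M^{*}(\bX_i;\hat{\btheta})^{1/2}\le 1+M^{**}(\bX_i)$, so this term is also $O_p(1)$.

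For (ii) and (iii), let $m$ denote an entry of the Hessian of $\ell$ or of $\partial H/\partial\btheta$, and split
\[
n^{-1}\sum_i m(Y_i^*,\bX_i^*,R_i^*;\hat{\btheta})-\E_0\{m(Y,\bX,R;\btheta^*)\}=A_n+B_n+C_n,
\]
where:
\begin{itemize}
\item $A_n$ is the bootstrap sum minus its conditional mean $\E_*\{m(\cdot;\hat{\btheta})\}=n^{-1}\sum_i m^*(\bX_i;\hat{\btheta})$;
\item $B_n=n^{-1}\sum_i\{m^*(\bX_i;\hat{\btheta})-m^*(\bX_i;\btheta^*)\}$;
\item $C_n=n^{-1}\sum_i m^*(\bX_i;\btheta^*)-\E_0\{m\}$.
\end{itemize}
For $A_n$, Condition~10 bounds $|m|\le M$ near $\btheta^*$. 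So the same conditional variance bound as in (i) applies, with $\E_*\{m^2\}\le n^{-1}\sum_i M^{**}(\bX_i)$, and $A_n=o_p(1)$. For $B_n$, a mean value expansion together with Condition~11 gives $|B_n|\le \|\hat{\btheta}-\btheta^*\|\,\sqrt{\dim\btheta}\;n^{-1}\sum_i M^{**}(\bX_i)=o_p(1)$. For $C_n$, note that $\E_0\{m\}=\E\{m^*(\bX;\btheta^*)\}$, because $F_0$ shares the marginal $F(\bx)$. The ordinary weak law of large numbers then gives $C_n=o_p(1)$, where integrability of $m^*(\bX;\btheta^*)$ follows from $|m^*|\le (M^{**})^{1/2}+$ const. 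This yields $-{\symbf{J}}_0$ in (ii) and ${\symbf{h}}_0$ in (iii).

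I expect the main obstacle to be the bookkeeping around the event $\{\|\hat{\btheta}-\btheta^*\|\le\epsilon\}$. The bounds from Conditions~10--11 only hold on that event, and statements of the form ``$o_p(1)$ conditionally'' must be converted into unconditional $o_p(1)$ statements. I would handle this by working on that event, whose probability tends to one, and using the inequality
\[
\Pr\bigl(\Pr_*(|A_n|>\eta)>\delta\bigr)\le \Pr\bigl(\eta^{-2}n^{-1}\cdot n^{-1}\textstyle\sum_i M^{**}(\bX_i)>\delta\bigr)+o(1)\to0.
\]
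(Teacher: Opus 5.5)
Your proposal is correct and follows essentially the same route as the paper: for (i), a conditional Chebyshev bound with the second moment dominated by $n^{-1}\sum_{i} M^{**}(\bX_i)$, plus Cauchy--Schwarz; for (ii)--(iii), the same three steps (conditional law of large numbers, expansion of $m^*(\bx;\hat{\btheta})$ about $\btheta^*$ via Condition~11, then the ordinary weak law of large numbers). Your explicit bookkeeping on the event $\{\|\hat{\btheta}-\btheta^*\|\le\epsilon\}$ and your bound on $B_n$ are slightly tidier than the paper's write-up; in particular, the $B_n$ bound needs only consistency of $\hat{\btheta}$ rather than the $\sqrt{n}$-rate the paper invokes.
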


\begin{proof}
We consider part~(i). Following \citet{cheng2010}, for the first statement it suffices to show that for any $\epsilon_1>0$ and $\epsilon_2>0$,
\[
\Pr\Bigg[
{\Pr}_*\Bigg(
\left|
\frac{1}{n}\sum_{i=1}^n M(Y_i^*,\bX_i^*,R_i^*)
-
\E_*\{M(Y^*,\bX^*,R^*)\}
\right|
\ge \epsilon_1
\Bigg)
\ge \epsilon_2
\Bigg]
\to 0.
\]

By Chebyshev's inequality,
\begin{align}
{\Pr}_*\Bigg(
\left|
\frac{1}{n}\sum_{i=1}^n M(Y_i^*,\bX_i^*,R_i^*)
-
\E_*\{M(Y^*,\bX^*,R^*)\}
\right|
\ge \epsilon_1
\Bigg)
&\le
\frac{\Var_*\!\left\{ M(Y^*,\bX^*,R^*) \right\}}
{n\epsilon_1^2}
\notag\\
&\le
\frac{\E_*\!\left\{ M^2(Y^*,\bX^*,R^*) \right\}}
{n\epsilon_1^2}.
\label{lemma1.eq0}
\end{align}
Under Condition~10 and using \eqref{theta_asy2}, we have
\begin{align}
\E_*\{M^2(Y^*,\bX^*,R^*)\}
&=
\int_{\bx}
\iint_{y,r}
M^2(y,\bx,r)
\, dF(y,r\mid\bx,\hat{\btheta})\,
dF_n(\bx)
\nonumber\\
&=
\frac{1}{n}\sum_{i=1}^n
M^*(\bX_i;\hat{\btheta})
\nonumber\\
&\le
\frac{1}{n}\sum_{i=1}^n
M^{**}(\bX_i),
\label{lemma1.eq1}
\end{align}
where $M^*(\bx;\btheta)
=
\iint_{y,r} M^2(y,\bx,r)\, dF(y,r\mid\bx,\btheta)$
and $M^{**}(\bx)$ is the dominating function in Condition~10.

Combining \eqref{lemma1.eq1} with \eqref{lemma1.eq0}, we obtain
\[
{\Pr}_*\Bigg(
\left|
\frac{1}{n}\sum_{i=1}^n M(Y_i^*,\bX_i^*,R_i^*)
-
\E_*\{M(Y^*,\bX^*,R^*)\}
\right|
\ge \epsilon_1
\Bigg)
\le
\frac{n^{-1}\sum_{i=1}^n M^{**}(\bX_i)}
{n\epsilon_1^2}.
\]

Hence,
\begin{align*}
&\Pr\Bigg(
\Pr_*\Bigg(
\left|
\frac{1}{n}\sum_{i=1}^n M(Y_i^*,\bX_i^*,R_i^*)
-
\E_*\{M(Y^*,\bX^*,R^*)\}
\right|
\ge \epsilon_1
\Bigg)
\ge \epsilon_2
\Bigg)
\\
&\le
\Pr\left(
\frac{n^{-1}\sum_{i=1}^n M^{**}(\bX_i)}
{n\epsilon_1^2}
>
\epsilon_2
\right)
\;\to\; 0,
\end{align*}
where the last step follows from Condition~10 and the weak law of large numbers.
This completes the proof of the first statement in part~(i), that is,
\begin{equation}
\label{lemma1.part1.first}
n^{-1}\sum_{i=1}^n
M(Y_i^*,\bX_i^*,R_i^*)
-
\E_*\{M(Y^*,\bX^*,R^*)\}
=
o_p(1).
\end{equation}

For the second statement in part~(i), by \eqref{lemma1.eq1} and Condition~10, we have
\[
0
\le
\E_*\{M(Y^*,\bX^*,R^*)\}
\le
\left[\E_*\{M^2(Y^*,\bX^*,R^*)\}\right]^{1/2}
\le
\left[
\frac{1}{n}\sum_{i=1}^n M^{**}(\bX_i)
\right]^{1/2}
= O_p(1),
\]
where the second inequality follows from the Cauchy--Schwarz inequality and the last step follows from Condition~10 and the weak law of large numbers.
Combining this with \eqref{lemma1.part1.first}, we obtain
\[
n^{-1}\sum_{i=1}^n
M(Y_i^*,\bX_i^*,R_i^*)
=
O_p(1).
\]
This completes the proof of part~(i). 

We now consider part~(ii). 
Arguing as in the proof of \eqref{lemma1.part1.first}, we obtain
\begin{align*}
n^{-1}\sum_{i=1}^n
\frac{\partial^2\ell(Y_i^*,\bX_i^*,R_i^*;\hat{\btheta})}
{\partial\btheta\,\partial\btheta^\top}
=
\E_*\left\{
\frac{\partial^2\ell(Y^*,\bX^*,R^*;\hat{\btheta})}
{\partial\btheta\,\partial\btheta^\top}
\right\}
+
o_p(1).
\end{align*}

Note that
\begin{align*}
\E_*\left\{
\frac{\partial^2\ell(Y^*,\bX^*,R^*;\hat{\btheta})}
{\partial\btheta\,\partial\btheta^\top}
\right\}
&=
\int_{\bx}
\iint_{y,r}
\frac{\partial^2\ell(y,\bx,r;\hat{\btheta})}
{\partial\btheta\,\partial\btheta^\top}
\, dF(y,r\mid\bx,\hat{\btheta})\,
dF_n(\bx).
\end{align*}

By \eqref{theta_asy2} and Condition~11 applied to the elements of
\[
\frac{\partial^2\ell(y,\bx,r;\btheta)}
{\partial\btheta\,\partial\btheta^\top},
\]
a first-order expansion around $\btheta^*$ yields
\begin{align*}
&\int_{\bx}
\iint_{y,r}
\frac{\partial^2\ell(y,\bx,r;\hat{\btheta})}
{\partial\btheta\,\partial\btheta^\top}
\, dF(y,r\mid\bx,\hat{\btheta})\,
dF_n(\bx)
\nonumber\\
&=
\int_{\bx}
\iint_{y,r}
\frac{\partial^2\ell(y,\bx,r;\btheta^*)}
{\partial\btheta\,\partial\btheta^\top}
\, dF(y,r\mid\bx,\btheta^*)\,
dF_n(\bx)
+
O_p(n^{-1/2}).
\end{align*}

Hence,
\begin{align*}
\E_*\left\{
\frac{\partial^2\ell(Y^*,\bX^*,R^*;\hat{\btheta})}
{\partial\btheta\,\partial\btheta^\top}
\right\}
&=
\int_{\bx}
\iint_{y,r}
\frac{\partial^2\ell(y,\bx,r;\btheta^*)}
{\partial\btheta\,\partial\btheta^\top}
\, dF(y,r\mid\bx,\btheta^*)\,
dF_n(\bx)
+
o_p(1).
\end{align*}

By the weak law of large numbers,
\begin{align}
\E_*\left\{
\frac{\partial^2\ell(Y^*,\bX^*,R^*;\hat{\btheta})}
{\partial\btheta\,\partial\btheta^\top}
\right\}
&=
\int_{\bx}
\iint_{y,r}
\frac{\partial^2\ell(y,\bx,r;\btheta^*)}
{\partial\btheta\,\partial\btheta^\top}
\, dF(y,r\mid\bx,\btheta^*)\,
dF(\bx)
+
o_p(1)
\nonumber\\
&=
-{\symbf{J}}_0
+
o_p(1).
\label{lemma1.part2.J0}
\end{align}

This completes the proof of part~(ii).

The proof of part~(iii) proceeds analogously to that of part~(ii). 
This completes the proof of the lemma.
\end{proof}

\begin{lemma}
\label{lemma2}
Assume Conditions~1--11. Conditional on the observed data, 
\[
\frac{1}{\sqrt{n}}
\sum_{i=1}^n
Z(Y_i^*,\bX_i^*,R_i^*;\hat{\btheta})
\;\xrightarrow{d}\;
\mathcal{N}\bigl(
{\symbf{0}}, \bSigma_{0}
\bigr).
\]
\end{lemma}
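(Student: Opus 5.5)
The approach is the Cramér--Wold device combined with a Lindeberg--Feller central limit theorem for the triangular array of bootstrap draws. Conditional on the observed data, the vectors $Z_i^* = Z(Y_i^*,\bX_i^*,R_i^*;\hat{\btheta})$, $i=1,\ldots,n$, are i.i.d. with law induced by $F(y,r\mid\bx,\hat{\btheta})F_n(\bx)$. The argument rests on three properties of this law, each of which must hold in probability. First, the conditional mean $\E_*(Z^*)$ must be exactly zero. Second, the conditional covariance $\Var_*(Z^*)$ must converge in probability to $\bSigma_0$. Third, a Lyapunov-type moment of order $2+\epsilon_0$ must be $o_p(1)$ after scaling. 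Once these are in place, for any fixed vector ${\symbf a}$ the conditional Lyapunov theorem gives ${\symbf a}^\top n^{-1/2}\sum_i Z_i^* \xrightarrow{d} \mathcal{N}(0,{\symbf a}^\top\bSigma_0{\symbf a})$ in probability. The multivariate statement then follows by Cramér--Wold, read along subsequences with almost-sure versions, as in \citet{cheng2010}.

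The mean-zero property comes from the structure of the bootstrap design. For each fixed $\bx$, the score $\psi(y,\bx,r;\hat{\btheta})$ integrates to zero against $F(y,r\mid\bx,\hat{\btheta})$, since it is the score of a correctly specified parametric likelihood evaluated at its own parameter. The $H$ component also integrates to zero. Under $R^*\sim\mathrm{Bernoulli}\{\pi(\bx;\hat{\btheta})\}$ we have $\E\{(R^*-\pi)^2\}=\pi(1-\pi)$, so $\E_*\{H(\bX^*,R^*;\hat{\btheta})\mid \bX^*\}=0$. Averaging over $F_n$ therefore gives $\E_*(Z^*)={\symbf 0}$ exactly, with no centering needed.

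For the covariance, $\Var_*(Z^*)=\E_*(Z^*Z^{*\top}) = n^{-1}\sum_i m^*(\bX_i;\hat{\btheta})$ with $m=ZZ^\top$. Here I follow the argument of Lemma~\ref{lemma1}(ii). Condition~11 and the mean value theorem give $|m^*(\bX_i;\hat{\btheta})-m^*(\bX_i;\btheta^*)|\le M^{**}(\bX_i)\|\hat{\btheta}-\btheta^*\|$, and $\|\hat{\btheta}-\btheta^*\| = O_p(n^{-1/2})$ by \eqref{theta_asy2}. The weak law of large numbers applied to $n^{-1}\sum_i m^*(\bX_i;\btheta^*)$ gives convergence to $\E_0(ZZ^\top)$. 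Since the $F_0$-mean of $Z$ is zero by the same argument as above, this limit equals $\bSigma_0$.

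The main obstacle is the Lyapunov condition, because it needs a uniform moment bound under a moving parameter. I would bound $\E_*|{\symbf a}^\top Z^*|^{2+\epsilon_0}$ by a constant times $\E_*\{M(Y^*,\bX^*,R^*)\}$, using Condition~10 for the $\psi$ components. For the $H$ component, $|H|\le 2$ is bounded, so any power of it is harmless. By the second statement of Lemma~\ref{lemma1}(i), $\E_*\{M\}=O_p(1)$, noting that on the event $\|\hat{\btheta}-\btheta^*\|\le\epsilon$, whose probability tends to one, the domination in Condition~10 is valid. The Lyapunov ratio is then $n^{-\epsilon_0/2}\,O_p(1)/\{{\symbf a}^\top\Var_*(Z^*){\symbf a}\}^{1+\epsilon_0/2}$. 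The denominator is bounded away from zero in probability because $\bSigma_0$ has full rank (Condition~9), so the ratio is $o_p(1)$. Convergence of the conditional characteristic functions then completes the proof.
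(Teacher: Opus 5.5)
Your proposal is correct and follows essentially the same route as the paper: the conditional mean is exactly zero by the bootstrap design, $\bSigma_n^*=\bSigma_0+o_p(1)$ follows from Condition~11 as in Lemma~\ref{lemma1}(ii), and a Lyapunov condition from Conditions~10--11 feeds a conditional central limit theorem. Your Cram\'er--Wold reduction is the scalar counterpart of the paper's standardization by $(\bSigma_n^*)^{-1/2}$ followed by the conditional Slutsky theorem, and you spell out the Lyapunov bound and the restriction to the event $\|\hat{\btheta}-\btheta^*\|\le\epsilon$, both of which the paper leaves implicit.
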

\begin{proof}
Since $F(y,r\mid \bx,\hat{\btheta})F_n(\bx)$
is the joint distribution of the bootstrap sample
\[
\{(R_i^* Y_i^*, \bX_i^*, R_i^*)\}_{i=1}^n,
\]
it follows that
\[
\E_*\!\left\{
Z(Y^*,\bX^*,R^*;\hat{\btheta})
\right\}
=
{\symbf{0}}.
\]

Let
\[
\bSigma_n^*
=
\Var_*\!\left\{
Z(Y^*,\bX^*,R^*;\hat{\btheta})
\right\}
=
\E_*\!\left[
Z(Y^*,\bX^*,R^*;\hat{\btheta})\,
Z^{\top}(Y^*,\bX^*,R^*;\hat{\btheta})
\right].
\]
Similarly to the proof of \eqref{lemma1.part2.J0}, we obtain
\begin{equation}
\label{lemma2.bsigma}
\bSigma_n^*
=
\bSigma_0
+
o_p(1).
\end{equation}

Conditions~10 and~11 verify the Lyapunov condition (p.~69 of \citealp{shao2003}), thereby ensuring that the conditional central limit theorem applies.
That is, conditional on the observed data,
\[
(\bSigma_n^*)^{-1/2}
\frac{1}{\sqrt{n}}
\sum_{i=1}^n
Z(Y_i^*,\bX_i^*,R_i^*;\hat{\btheta})
\;\xrightarrow{d}\;
\mathcal{N}\bigl(
{\symbf{0}}, {\symbf{I}}
\bigr),
\]
where ${\symbf{I}}$ denotes the identity matrix.

Together with \eqref{lemma2.bsigma} and the conditional Slutsky theorem \citep{cheng2015}, it follows that
\[
\frac{1}{\sqrt{n}}
\sum_{i=1}^n
Z(Y_i^*,\bX_i^*,R_i^*;\hat{\btheta})
\;\xrightarrow{d}\;
\mathcal{N}\bigl(
{\symbf{0}}, \bSigma_0
\bigr).
\]

This completes the proof of the lemma.
\end{proof}
\subsection{Proof of Theorem 2
\label{proof.thm2.main}}

With the help of Lemmas~\ref{lemma1} and~\ref{lemma2}, we prove Theorem~2 in three steps.

\medskip
\noindent
\textbf{Step 1.} Derive a linear representation of the bootstrap estimator $\hat{\btheta}^*$.
\medskip

Note that $\hat{\btheta}^*$ solves
\[
\sum_{i=1}^n 
\psi(Y_i^*,\bX_i^*,R_i^*;\hat{\btheta}^*)
=
{\symbf{0}}.
\]
By Condition~8 and \eqref{theta_asy2}, we have
\[
\hat{\btheta}^*-\hat{\btheta}=o_p(1).
\]
A first-order Taylor expansion of $\psi(Y_i^*,\bX_i^*,R_i^*;\btheta)$ around $\hat{\btheta}$ yields
\begin{equation}
\label{step1.eq0}
{\symbf{0}}
=
\sum_{i=1}^n 
\psi(Y_i^*,\bX_i^*,R_i^*;\hat{\btheta})
+
\sum_{i=1}^n
\frac{\partial^2\ell(Y_i^*,\bX_i^*,R_i^*;\hat{\btheta})}
{\partial\btheta\,\partial\btheta^\top}
(\hat{\btheta}^*-\hat{\btheta})
+
\epsilon_n^*,
\end{equation}
where $\epsilon_n^*$ is the remainder term satisfying
\[
\|\epsilon_n^*\|
\le
(p+m+2)^2
\sum_{i=1}^n
M(Y_i^*,\bX_i^*,R_i^*)\,
\|\hat{\btheta}^*-\hat{\btheta}\|^2.
\]

By part~(i) of Lemma~\ref{lemma1},
\[
\sum_{i=1}^n
M(Y_i^*,\bX_i^*,R_i^*)
=
O_p(n),
\]
and hence,
\begin{equation}
\label{step1.eq1}
\epsilon_n^*
=
o_p(n)\,\|\hat{\btheta}^*-\hat{\btheta}\|.
\end{equation}
Using part~(ii) of Lemma~\ref{lemma1}, we also have
\begin{equation}
\label{step1.eq2}
\sum_{i=1}^n
\frac{\partial^2\ell(Y_i^*,\bX_i^*,R_i^*;\hat{\btheta})}
{\partial\btheta\,\partial\btheta^\top}
=
-n{\symbf{J}}_0
+
o_p(n).
\end{equation}
Lemma~\ref{lemma2} further implies that
\begin{equation}
\label{step1.eq3}
\frac{1}{\sqrt{n}}
\sum_{i=1}^{n}
\psi(Y_i^*,\bX_i^*,R_i^*;\hat{\btheta})
=
O_p(1).
\end{equation}

Plugging \eqref{step1.eq1}--\eqref{step1.eq3} into \eqref{step1.eq0} and solving for $\hat{\btheta}^*-\hat{\btheta}$ yield the linear approximation of $\hat{\btheta}^*$:
\begin{equation}
\label{thetaB_asy2}
\sqrt{n}(\hat{\btheta}^* - \hat{\btheta}) 
=
{\symbf{J}}_0^{-1}
\frac{1}{\sqrt{n}}
\sum_{i=1}^{n}
\psi(Y_i^*,\bX_i^*,R_i^*;\hat{\btheta})
+
o_p(1).
\end{equation}

\medskip\noindent
\textbf{Step 2.} Derive the stochastic expansion of the bootstrap statistic $T_n^*$.
\medskip

Similarly, we perform a first-order Taylor expansion of $T_n^*$ around $\hat{\btheta}$:
\begin{equation}
\label{testB-expan1}
T_n^*
=
\frac{1}{\sqrt{n}}\sum_{i=1}^{n} H(\bX_i^*,R_i^*;\hat{\btheta})
+
\frac{1}{\sqrt{n}}\sum_{i=1}^{n}
\left\{
\frac{\partial H(\bX_i^*,R_i^*;\hat{\btheta})}
{\partial\btheta}
\right\}^{\!\top}
(\hat{\btheta}^*-\hat{\btheta})
+
\epsilon_{n}^{**},
\end{equation}
where the remainder term $\epsilon_{n}^{**}$ satisfies
\[
|\epsilon_{n}^{**}|
\le
\frac{(p+m+2)^2}{\sqrt{n}}
\sum_{i=1}^n
M(Y_i^*,\bX_i^*,R_i^*)
\,
\|\hat{\btheta}^*-\hat{\btheta}\|^2.
\]

By \eqref{thetaB_asy2}, part~(i) of Lemma~\ref{lemma1}, and Lemma~\ref{lemma2}, we further obtain
\begin{equation}
\label{step2.eq1}
\epsilon_{n}^{**}
=
O_p(n^{-1/2})
=
o_p(1).
\end{equation}
By part~(iii) of Lemma~\ref{lemma1}, we also have
\begin{equation}
\label{step2.eq2}
\frac{1}{n}
\sum_{i=1}^{n}
\frac{\partial H(\bX_i^*,R_i^*;\hat{\btheta})}
{\partial\btheta}
=
{\symbf{h}}_0
+
o_p(1).
\end{equation}

Plugging \eqref{thetaB_asy2} and \eqref{step2.eq1}--\eqref{step2.eq2} into \eqref{testB-expan1} yields
\begin{equation}
\label{testB-expan2}
T_n^*
=
\frac{1}{\sqrt{n}}
\sum_{i=1}^{n}
\left\{
H(\bX_i^*,R_i^*;\hat{\btheta})
+
{\symbf{h}}_0^\top
{\symbf{J}}_0^{-1}
\psi(Y_i^*,\bX_i^*,R_i^*;\hat{\btheta})
\right\}
+
o_p(1).
\end{equation}
\medskip

\noindent
\textbf{Step 3.} Establish the limiting distribution of $T_n^*$ conditional on the observed data.
\medskip

Note that $T_n^*$ in \eqref{testB-expan2} can be rewritten as
\begin{equation}
\label{testB-expan3}
T_n^*
=
(1,{\symbf{h}}_0^\top {\symbf{J}}_0^{-1})
\frac{1}{\sqrt{n}}
\sum_{i=1}^{n}
Z(Y_i^*,\bX_i^*,R_i^*;\hat{\btheta})
+
o_p(1).
\end{equation}

By Lemma~\ref{lemma2} and the conditional Slutsky theorem \citep{cheng2015}, it follows that, conditional on the observed data,
\[
T_n^*
\xrightarrow{d}
\mathcal{N}(0,\sigma^2_{\mathrm{Boot}}),
\]
where
\begin{equation}
\label{sigma_boot}
\sigma^2_{\mathrm{Boot}}
=
(1,{\symbf{h}}_0^\top {\symbf{J}}_0^{-1})
\bSigma_0
(1,{\symbf{h}}_0^\top {\symbf{J}}_0^{-1})^\top.
\end{equation}

Under $H_0$, we have
\[
{\symbf{h}} = {\symbf{h}}_0, 
\qquad
{\symbf{J}} = {\symbf{J}}_0, 
\qquad
{\bSigma} = {\bSigma}_0.
\]
Hence, under $H_0$,
\[
\sigma^2 = \sigma^2_{\mathrm{Boot}}.
\]
This completes the proof of Theorem~2.

\section{Proof of Theorem 3 in the main paper\label{proof.thm3}}

Let $q_{1-a}(\sigma^2)$ and $q_{1-a}(\sigma^2_{\mathrm{Boot}})$
denote the $(1-a)$-quantiles of $N(0,\sigma^2)$ and
$N(0,\sigma^2_{\mathrm{Boot}})$, respectively.

By Theorem~2 and Problem~23.1 of \citet{van2000asymptotic}, we have
\[
\sup_x
\left|
{\Pr}_*\!\left(T_n^* \le x\right)
-
\Pr\!\left\{ N(0,\sigma^2_{\mathrm{Boot}}) \le x \right\}
\right|
=
o_p(1).
\]
Together with Lemma~21.2 of \citet{van2000asymptotic}, it follows that 
\begin{equation}
\label{quan.eq1}
q_{n,1-a}^*
=
q_{1-a}(\sigma^2_{\mathrm{Boot}})
+
o_p(1).
\end{equation}

For (i), under $H_0$, we have $\sigma^2=\sigma^2_{\mathrm{Boot}}$.
Then \eqref{quan.eq1} implies
\begin{equation}
\label{quan.eq2}
q_{n,1-a}^*
=
q_{1-a}(\sigma^2)
+
o_p(1).
\end{equation}
Hence,
\begin{align*}
\Pr\!\left(|T_n|>q_{n,1-a}^*\right)
&=
\Pr\!\left(
|T_n|>
q_{1-a}(\sigma^2)
+
o_p(1)
\right).
\end{align*}

Recall that under $H_0$, $\Delta=0$. By Theorem~1,
\[
T_n \xrightarrow{d} N(0,\sigma^2).
\]
Hence, by the continuous mapping theorem,
\[
|T_n| \xrightarrow{d} |N(0,\sigma^2)|.
\]
Together with \eqref{quan.eq2} and Slutsky’s theorem, we obtain
\[
\Pr\!\left(|T_n|>q_{n,1-a}^*\right)
\longrightarrow
\Pr\!\left(|N(0,\sigma^2)|>q_{1-a}(\sigma^2)\right)
=
a,
\]
as $n\to\infty$.

For (ii), assume that $H_A$ holds and $\Delta\neq 0$.
Without loss of generality, assume $\Delta>0$. Then
\begin{align}
\label{partii.eq1}
\Pr\!\left(|T_n|>q_{n,1-a}^*\right)
&\ge
\Pr\!\left(T_n>q_{n,1-a}^*\right) =
\Pr\!\left(
T_n-\sqrt{n}\Delta
>
q_{1-a}(\sigma^2_{\mathrm{Boot}})
-\sqrt{n}\Delta
+
o_p(1)
\right).
\end{align}

By Theorem~1, 
\[
T_n-\sqrt{n}\Delta
\xrightarrow{d}
N(0,\sigma^2).
\]
Hence, by Slutsky’s theorem and Lemma~2.11 of \citet{van2000asymptotic},
\[
\delta_n
=
\sup_x
\left|
\Pr\!\left(T_n-\sqrt{n}\Delta -o_p(1)\le x\right)
-
\Pr\!\left(N(0,\sigma^2)\le x\right)
\right|
=
o(1).
\]
Therefore, from \eqref{partii.eq1},
\begin{align*}
\Pr\!\left(|T_n|>q_{n,1-a}^*\right)
&\ge
\Pr\!\left(
N(0,\sigma^2)
>
q_{1-a}(\sigma^2_{\mathrm{Boot}})
-
\sqrt{n}\Delta
\right)
-
\delta_n.
\end{align*}

Since $\Delta>0$, we have 
\[
q_{1-a}(\sigma^2_{\mathrm{Boot}})
-
\sqrt{n}\Delta
\;\longrightarrow\;
-\infty,
\]
and therefore
\[
\Pr\!\left(
N(0,\sigma^2)
>
q_{1-a}(\sigma^2_{\mathrm{Boot}})
-
\sqrt{n}\Delta
\right)
\longrightarrow 1.
\]
Consequently,
\[
\Pr\!\left(|T_n|>q_{n,1-a}^*\right)
\longrightarrow 1
\qquad\text{as } n\to\infty.
\]
This completes the proof. 


\bibliography{gof_ref}
\end{document}